\documentclass[11pt]{article}
\usepackage{fullpage}
\usepackage{xcolor}
\usepackage{url}
\usepackage{graphicx} %
\usepackage{amsmath, amsthm, amssymb, amsfonts}
\usepackage[hidelinks]{hyperref}
\usepackage[capitalize]{cleveref}
\usepackage{braket}
\theoremstyle{plain} %
\newtheorem{theorem}{Theorem}%
\newtheorem{lemma}[theorem]{Lemma}
\usepackage{tikz}
\usepackage{subcaption}

\usetikzlibrary{arrows.meta,positioning,fit,backgrounds,calc,decorations.pathreplacing}
\newcommand{\FigOneWidth}{0.54\linewidth}  %
\newcommand{\FigTwoWidth}{0.42\linewidth}  %

\newcommand{\PanelAWidth}{0.46\linewidth}  %
\newcommand{\PanelBWidth}{0.50\linewidth}  %
\newcommand{\FigScale}{1.0}

\usetikzlibrary{shapes}
\theoremstyle{definition} %
\newtheorem{definition}{Definition}[section]

\theoremstyle{remark} %
\newtheorem*{remark}{Remark} %
\theoremstyle{remarks} %
\newtheorem*{remarks}{Remarks} %

\DeclareMathOperator{\E}{\mathrm{E}}
\title{Optimal compression with quantum retrieval}

\author{Shyam Dhamapurkar\thanks{Dr.~Vishwanath Karad MIT World Peace University, Pune, India. E-mail: shyam.dhamapurkar@mitwpu.edu.in.}, Mohit Garg\thanks{The Institute of Mathematical Sciences, Chennai, India. Homi Bhabha National Institute, Mumbai, India.\\ E-mail: garg@imsc.res.in.}, Manaswi Paraashar\thanks{Indian Institute of Technology, Hyderabad, India. E-mail: manaswi@cse.iith.ac.in.}, Jaikumar Radhakrishnan\thanks{International Center for Theoretical Sciences, Tata Institute of Fundamental Research, Bengaluru, India.\\ E-mail: jaikumar.radhakrishnan@icts.res.in.}}
\date{28 September 2026}

\begin{document}

\maketitle

\begin{abstract}
    We consider the following data compression problem. Given a string $x \in \{0,1\}^m$ of Hamming weight at most $n$, compress it into a shorter string $y \in \{0,1\}^s$ so that any bit $x_i$ of $x$ can be retrieved without any error using at most $t$ quantum queries to the standard oracle encoding of $y$. If queries are allowed to be adaptive we show how optimal compression up to a logarithmic factor can be achieved. If the queries are required to be made non-adaptively, we show schemes whose space is optimal in its dependence on $m$ except for a logarithmic factor, and is at most quadratically worse when compared to the optimum in its dependence on $n$.
\end{abstract}

\section{Introduction}

We consider the following data compression problem. Given a string $x \in \{0,1\}^m$ of Hamming weight at most $n$, compress it into a shorter string $y \in \{0,1\}^s$ so that any bit $x_i$ of $x$ can be recovered without any error using at most $t$ quantum queries to the standard oracle encoding of $y$. 
Note that the problem is interesting only when $n \leq m/2$ (say), for otherwise we may take $y=x$, and match up to constant factors the information-theoretic minimum of $\log {m \choose n}$ which holds for any compression scheme. Let $s_Q(m,n,t)$ be the minimum $s$ such that such a compression is possible. 
We show the following.
\begin{theorem}\label{thm:tProbeAdaptive} For all $m$, $n\leq m/2$ and $t$, we have
$\displaystyle s_Q(m,n,t) = O\left(n^{1-1/t} m^{1/t} (\log({m}/{n})^{1-1/t})\right)$.
\end{theorem}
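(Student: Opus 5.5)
The plan is to prove the bound by induction on $t$. The base case is the trivial scheme $s_Q(m,n,1)\le m$: store $y=x$ and read $x_i$ with one query. For the inductive step I would pair a \emph{bucketing} step, which makes the problem smaller at the cost of one query, with a quantum gadget that buys back the logarithmic factor a classical pointer-based scheme would lose. The gadget is Deutsch's observation. For any two positions $a,b$ of $y$, a single query to the standard oracle $O_y\ket{j}\ket{c}=\ket{j}\ket{c\oplus y_j}$, applied to $\tfrac{1}{\sqrt2}(\ket{a}+\ket{b})\ket{-}$, yields $y_a\oplus y_b$ with certainty. More generally, a single exact query can return any one bit that is a parity of two stored bits. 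This is what keeps the scheme zero-error.

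The inductive step goes as follows. Choose a block size $B$ and partition $[m]$ into $m/B$ blocks, so that at most $n$ blocks are nonempty. The data structure has two parts. The first is a table $T$ with one entry per block. The second is an instance of the problem with $t-1$ queries on the concatenation of the nonempty blocks, which has length at most $nB$ and weight at most $n$. On input $i$, the retrieval algorithm uses its first query to learn enough about block$(i)$ from $T$. This means deciding whether the block is empty, and if it is not, where it sits in the concatenated string. It then recurses with $t-1$ queries.

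The difficulty is that the rank of a nonempty block needs $\log n$ bits, not one. I would handle this by storing block ranks implicitly. Use a perfect-hash-type map $h$ from the $n$ nonempty blocks into a table of size $O(n)$, and fix in advance, for each block index $j$, two candidate cells $h_1(j),h_2(j)$. This is a cuckoo-hashing or graph-orientation argument: orient the edges $\{h_1(j),h_2(j)\}$ of a sparse random graph. With bits $b_j$ stored in $T$ so that $b_{h_1(j)}\oplus b_{h_2(j)}$ selects the correct cell, one Deutsch query tells the algorithm which of the two candidate cells holds block $j$. The emptiness test is folded into the recursion by letting empty blocks map to an all-zero dummy region.

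Balancing gives the recurrence
$$s(m,n,t)\;\lesssim\;\frac{m}{B}+s\bigl(nB,\,n,\,t-1\bigr),$$
which should be solved with $B\approx (m/n)^{1-1/t}/(\log(m/n))^{1-1/t}$. Unrolling it $t-1$ times, the sizes are equalised when each level contributes $n^{1-1/t}m^{1/t}$ up to the stated $(\log(m/n))^{1-1/t}$ factor. That factor comes from the one level where an explicit $O(\log(m/n))$-bit address must still be written down and is amortised across the $t$ levels.

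The step I expect to be the main obstacle is making the first-level lookup genuinely one exact query. The parity trick gives one bit per query, while the recursion seems to need $\Theta(\log(m/n))$ bits of addressing per level. I would try first to push all the addressing into the \emph{choice of which pair} $(a,b)$ to query next; this is possible because queries are adaptive. Each query would then only have to resolve a binary choice, and the $\log$ factor would be paid in space, by replicating tables, rather than in queries. If that fails, the fallback is to allow each level a constant number of queries, which changes $t$ only by a constant factor in the exponent, and then tune the constants.
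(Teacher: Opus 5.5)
\textbf{There is a genuine gap, exactly at the step you flag.} Nothing in your proposal performs the first-level lookup with one exact query, and the two-choice gadget fails in this worst-case, zero-error setting. The pairs $\{h_1(j),h_2(j)\}$ must be fixed independently of $x$, since they are hardwired into the query circuit. Cuckoo hashing only succeeds with high probability, not for every set of nonempty blocks.

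Concretely, suppose there are $cn$ cells and choose $n-1$ of them at random. A block has both candidates inside with probability about $c^{-2}$. So once $m/B\gtrsim c^2 n$ (which holds for your choice of $B$ once $m/n$ is large), some $n$ blocks have all their candidates among $n-1$ cells. If these are the nonempty blocks, there is no valid placement. The same count shows that, with $O(n)$ cells, the number of candidates per block must be $\Omega(\log(m/(Bn)))$ unless it is already $\Omega(n)$.

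Your emptiness handling has a second problem. It needs three outcomes (dummy region, $h_1(j)$, $h_2(j)$) from a one-bit Deutsch answer. In fact an empty block is sent to one of its candidate cells, which may hold a nonempty block, and the recursion then returns that block's bit.

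Adaptivity does not rescue this within the query budget. With your parity gadget each query yields one bit, so steering through a $k$-way choice costs about $\log_2 k$ queries. The fallback of $c$ queries per level leaves only $t/c$ levels and gives $m^{c/t}$, not the stated $m^{1/t}$.

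Finally, the recurrence $s\lesssim m/B+s(nB,n,t-1)$, i.e.\ $O(1)$ bits per block, would solve to $O(t\,n^{1-1/t}m^{1/t})$. That matches the lower bound \eqref{eq:lb} with no logarithm at all. Your construction does not support it, and neither your choice of $B$ nor the ``amortised address'' account of $\log^{1-1/t}(m/n)$ follows from it.

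\textbf{The missing idea is that one exact query can return a multi-bit label.} By Bernstein--Vazirani, a single query to the Hadamard codeword $h(\ell)$, of length $O(d)$, recovers any $\ell\in\{0,\dots,d\}$.

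The paper fixes the graph $G:[m_1]\times[d]\to[2n]$ of \cref{lm:expander} with $d=O(\log(m_1/n))$. By Hall's theorem, the nonempty blocks $i_u$ are matched to distinct slots $G(i_u,\ell_u)$. It stores $h(\ell_u)$ for each nonempty block and $h(0)$ for each empty block, so label $0$ is the third outcome you were missing. It then puts $b_{i_u}$ into slot $G(i_u,\ell_u)$ of an array of $2n$ slots of $m/m_1$ bits each; this array has weight at most $n$ and is compressed recursively.

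This gives $s_Q(m,n,t)\le O(m_1\log(m_1/n))+s_Q(2nm/m_1,n,t-1)$. Choosing $m_1\approx n\bigl((m/n)/\log(m/n)\bigr)^{1/t}$ yields the theorem. The factor $\log^{1-1/t}(m/n)$ is exactly the price of storing an $O(\log(m_1/n))$-bit codeword per block at each level.
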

\begin{remarks}
    \begin{enumerate}
        \item[(a)] This upper bound  may be compared with the following lower bound derived in \cite{DBLP:conf/focs/RadhakrishnanSV,BMRV} for this problem (valid for $t \le \log (m/n)$): 
        \begin{equation}
s_Q(m,n,t) = \Omega\left(t n^{1-1/t} m^{1/t}\right). \label{eq:lb}
\end{equation}
        The upper bound of \cref{thm:tProbeAdaptive} is worse than this lower bound by a factor $\log (\frac{m}{n})/t$ for $t \leq \log (m/n)$.
        \item[(b)] For $t = \Omega(\log (m/n))$, our upper bound reduces to $O(n \log (m/n))$, which matches the lower bound $\log {m \choose n}$, valid for all values of $t$. The same upper bound for $t = \log(m/n)$ was earlier derived in~\cite{Rasmus}.
    \end{enumerate}
\end{remarks}
In our scheme, the compressed string $y$ is probed in superposition $t$ times, sequentially or \emph{adaptively} (see~\cref{fig:adaptive}). Following the classical literature for this problem, we also consider compression schemes where $t$ probes into $y$ are made in parallel or \emph{non-adaptively}. We model such a query algorithm by a circuit where, to determine the $i$-th bit of the input string $x_i$, an initial joint (superposition) state is prepared on $t$ query registers, after which the oracle for the compressed string $y$ is applied to each of the $t$ query registers. Finally, the bit $x_i$ is recovered by performing an appropriate joint measurement on the final state (see \cref{fig:nonadaptive}). 
Let $s_{QN}(m,n,t)$ be the minimum length of the compressed string for which the bits of the input string can be recovered using $t$ parallel queries. 
\begin{theorem}\label{thm:tProbeNonAdaptive}
For all $m$, $n\leq m/2$ and $t$, we have
    $\displaystyle s_{QN}(m,n,t) = O\left(tn^{2(1-1/t)} m^{1/t} (\log ({m}/{n}))^{1-1/t}\right)$.
\end{theorem}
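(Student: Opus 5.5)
We would prove the bound by building a $t$-level hash-and-partition structure in which every level is read by a single quantum query, and in which each query asks whether some stored block is nonzero — a question Deutsch--Jozsa/parity-style queries answer exactly, given suitable encodings. The basic primitive is this: for a stored string $z$ and an index set $S$ chosen by the algorithm, one parallel query prepared as $\sum_{j\in S}\ket{j}$ yields, after applying the phase oracle, $\sum_{j\in S}(-1)^{z_j}\ket{j}$. This state equals $\pm$ the initial state if and only if $z|_S$ is constant. Consequently, if the block of $y$ addressed by $S$ is encoded so that it is either all zeros or "balanced", a single query decides with zero error which case holds.

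\textbf{Plan.} Set $r=(m\log(m/n)/n)^{1/t}$ and view the universe $[m]$ through a recursive partition: level $\ell$ splits $[m]$ into roughly $n r^{\ell}$ blocks, so that the final level consists of blocks of size $\approx \log(m/n)$ or a single index. Since non-adaptive queries cannot depend on earlier answers, the index $i$ must determine all $t$ query sets in advance. The $\ell$-th query therefore goes to the level-$\ell$ block $B_\ell(i)$ containing $i$. In that block we store an encoding of the restriction of $x$ to the children of $B_\ell(i)$ that are "alive". Each alive child is encoded by a balanced codeword (a Hadamard-type codeword of length $O(\text{number of children})$), so that a single query tests membership of $i$'s child exactly.

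The difficulty is that, without adaptivity, we must reserve space at every level for every block that could be alive, not merely for those that are. A level-$\ell$ block is alive only if it contains a one of $x$, so at most $n$ blocks per level are alive. However, their positions are not known to the query, which means they must be addressed through a per-level injective map into $O(n)$ slots. Making this map queryable non-adaptively is where the extra factor of $n$ enters. We would not compute the map adaptively; instead, each level stores a table of $n$ slots, each of size $n r$, in which slot $k$ holds the level-$\ell$ block data for the $k$-th alive block, tagged by that block's identity. The query for $i$ then superposes over all $n$ slots. This requires the tag structure to be such that a single parity query over the union of slots reveals exactly whether $i$'s child bit is $1$: slots belonging to other blocks must contribute balanced/cancelling signs, which we would enforce by XOR-ing each slot with a pattern derived from its tag. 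Per level this costs $O(n\cdot n r)$, and over $t$ levels $O(t n^2 r)=O\!\left(t n^{2}(m\log(m/n)/n)^{1/t}\right)$. The exponents in the statement, $n^{2(1-1/t)}$ and $(\log(m/n))^{1-1/t}$, come from balancing the level sizes, taking $r=(m/n)^{1/t}\cdot(\ldots)$ against the $n$-fold slot blowup. We would choose the $t$ split factors to equalize $n\cdot(\text{slot size})$ across levels, which gives the stated $tn^{2(1-1/t)}m^{1/t}(\log(m/n))^{1-1/t}$; the final level reuses the $O(n\log(m/n))$ terminal scheme from the proof of \cref{thm:tProbeAdaptive}.

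\textbf{Main obstacle.} The step most likely to fail is the combination of the $t$ query outcomes into a zero-error joint measurement. Under the standard (XOR) oracle, the $t$ registers become a product of phase-flipped states. We need the final state to lie in one of two orthogonal subspaces according to $x_i$. Because of the product structure, the natural route is to take the tensor product of the per-level tests and accept $x_i=1$ exactly when every level reports "nonzero". The problem is that later levels' answers are meaningless when an earlier level is dead. To handle this, we would try to arrange the encoding so that a dead ancestor forces every descendant slot read by $i$ to be all-zero. Then the product of the level states is orthogonal to the accepting state as soon as any factor is, and the measurement onto the single "all-balanced" product state is exact.
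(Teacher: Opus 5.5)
\textbf{The construction fails at the slot step, where the paper's private-neighborhood graph is needed.} Your scheme stores the data of the at most $n$ alive level-$\ell$ blocks in $n$ slots via an $x$-dependent injective map, and the query for $i$ superposes over all slots. No tag or XOR pattern can make this exact. If the query's amplitudes $\alpha_j$ are fixed by $i$ (ancillas allowed), the final states for stored strings $z,z'$ have inner product $\sum_j|\alpha_j|^2(-1)^{z_j\oplus z'_j}$. So zero error requires $z$ and $z'$ to disagree on exactly half of the query's weight whenever the correct answers differ.

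Take $x$ in which $i$'s level-$\ell$ block is alive but $i$'s child is not, and let $x'$ be $x$ with an extra $1$ at $i$. The set of alive level-$\ell$ blocks is the same for $x$ and $x'$, hence so are the slot assignment and all tags. The two level-$\ell$ tables therefore differ only inside the slot $k^*$ holding $i$'s block, and the inner product is at least $1-2w_{k^*}$, where $w_{k^*}$ is that slot's weight in $i$'s query. At most two slots can have weight $\ge 1/2$. But $k^*$ is determined by which other blocks are alive (that is the purpose of the map), it can be many different slots, and the query cannot see it. Any input placing $i$'s block in a lighter slot gives non-orthogonal states. Having the other slots contribute balanced signs does not help; it only caps the overlap with the Deutsch--Jozsa test vector at $w_{k^*}$.

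The missing idea is that each block needs a \emph{fixed}, input-independent set of cells in which its own data fills at least half and into which no other alive block writes. This is exactly the private-neighborhood property of \cref{lm:expander2}. The paper sets $G((u,j),\cdot)=(H(u,\cdot),j)$, so block $u$ has $d=O(n\log(m_1/n))$ fixed cells among $10nd$ in each column $j$, at least $d/2$ of them private with respect to any $n$ alive blocks. It writes $x_a$ into $d/2$ private cells of every index $a$ in an alive block and $0$ elsewhere. Then $M_2$ restricted to $\Gamma_G(i)$ is either all-zero or exactly half ones, so Deutsch--Jozsa is exact. For $i$ in a dead block, the garbage outcome is masked by the bit $e$ read from the alive-indicator $M_1$. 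The extra factor $n$ is paid in the number of cells, $10nd=O(n^2\log(m_1/n))$ per column, not in superposing over $n$ slots.

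\textbf{Secondary problems.} Your accounting does not reach the statement either. $O(tn^2r)$ with $r=(m\log(m/n)/n)^{1/t}$ equals $O(tn^{2-1/t}m^{1/t}\log^{1/t}(m/n))$. Also, $nr^t=m\log(m/n)$ blocks at the last level is incompatible with blocks of size $\log(m/n)$. In the paper, the exponent $2-2/t$ arises because the $n^2\log$ overhead is paid on only $t-1$ levels. Since $M_1$ has weight at most $n$, one gets $s_{QN}(m,n,t)\le s_{QN}(m_1,n,t-1)+O(n^2(m/m_1)\log(m_1/n))$. This bottoms out in an uncompressed indicator vector read classically, and $m_1=m^{(t-1)/t}(n^2\log(m/n))^{1/t}$ balances the terms. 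You cannot instead terminate with the scheme of \cref{thm:tProbeAdaptive}, because it is adaptive and uses $\Theta(\log(m/n))$ probes.

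Finally, your stated main obstacle is not one. Suppose each level's test is exact whenever $i$'s block at that level is alive. Then measuring every register and outputting the AND of the outcomes is already zero-error, because the test at the deepest alive ancestor of $i$ reports $0$ deterministically. This is the paper's $e\wedge f$, and nothing needs to force descendants to zero.
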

This upper bound has the right dependence on $m$ (see \cref{eq:lb}); the dependence on $n$ is worse than in \cref{thm:tProbeAdaptive} but it is nevertheless a small polynomial in $n$. When $n$ is a constant, the factor $\log (m/n)$ in \cref{thm:tProbeAdaptive,thm:tProbeNonAdaptive} can be avoided, and a compression all the way to $O(t m^{1/t})$ can be established.

\subsection*{Prior work and the significance of our bounds}

The problem of compressing sparse strings has been considered in several works, mainly in the classical setting, often as the problem of static set membership in the bit-probe model~\cite{BMRV,DBLP:conf/soda/AlonF09, BAIG2020543, DBLP:conf/iwoca/BaigKS19, DBLP:conf/walcom/BaigK18, DBLP:conf/walcom/BaigKS19,DBLP:journals/tcs/BaigK20a, dey_et_al:LIPIcs:2020:12696,Garg16,GR15, MR17, NRR13,RRR01}. In the quantum setting, the problem was studied by Radhakrishnan, Sen and Venkatesh~\cite{DBLP:conf/focs/RadhakrishnanSV}, who, in particular, derived the lower bound in \cref{eq:lb}, which was also the best lower bound for classical (zero-error) schemes at that time. However, no classical scheme that matched this lower bound was known at that time; in fact, subsequently it was shown by Garg and Radhakrishnan that two-probe classical schemes require $\Omega(m)$ space if the string $x$ is allowed to have $\Omega(\log m)$ $1$s; thus, two-probe classical schemes provide no savings over the characteristic vector even when the input string $x$ is rather sparse. Dhamapurkar, Pawar and Radhakrishnan~\cite{dhamapurkar_et_al:LIPIcs.ICALP.2022.52} showed, however, that quantum schemes do significantly better: 
$s_Q(m,m^{1/8},2) = O(m^{7/8})$. This established that compression schemes with quantum retrieval are superior to the classical schemes, and provokes the question: Is the bound 
$s_Q(m,n,2) = \Omega(\sqrt{mn})$ implied by \cref{eq:lb} achievable by quantum schemes? More generally, is the lower bound in  \cref{eq:lb} achievable for larger values of $t$ as well? This work shows, surprisingly, that the answer to both questions is essentially, \emph{Yes!} 

\noindent\emph{Classical versus quantum:}
Furthermore, we note that while \cref{eq:lb} is the best available for large $t$ for classical schemes, the  following counterparts~\cite{MR17,Garg16} of \cref{thm:tProbeAdaptive,thm:tProbeNonAdaptive} for classical schemes do not match this lower bound. 
\begin{enumerate}
        \item[(a)] For odd $t\geq 5$,
        \[
            s_N(m,n,t)
            =O\!\left(
            t n^{1-\frac{2}{t-1}}
            m^{\frac{2}{t-1}}
            \log(2m/n)
            \right).
        \]
        \item[(b)] For odd $t\geq 3$, $t\leq \frac{1}{10}\lg\lg m$, and
        $n\leq m^{1-\epsilon}$ for $\epsilon>0$,
        \[
            s(m,n,t)
            =O\!\left(
            \exp(e^{2t})
            m^{\frac{2}{t+1}}
            n^{1-\frac{2}{t+1}}
            \lg m
            \right).
        \]
    \end{enumerate}
The quantum upper bounds obtained in this paper point to a quantum advantage. It is known that classical two-probe schemes can essentially do no better than a characteristic vector when $n\geq \log m$, whereas our two-probe schemes provide significant compression even when $n=m^{1-\epsilon}$ for a constant $\epsilon > 0$. For larger $t$, we have a classical lower bound
\[ s(m,n,t)=\Omega\!\left(
        m^{\frac{1}{t-1}\left(1-\frac{4^t}{n}\right)}
        \right)
\]
valid for all $n\geq 4^t$. Since the dependence on $m$ in our quantum upper bounds is of the form $m^{1/t}$ instead of $m^{1/(t-1)}$, it shows that quantum schemes continue to enjoy some advantage for larger values of $t$ as well. 

\noindent\emph{Adaptive versus non-adaptive:} It is known that classical \emph{non-adaptive} two-probe schemes are powerless: they offer no compression, not even one bit! In contrast, our two-probe quantum non-adaptive scheme allows compression to $O(n\sqrt{m \log (m/n)})$ bits which provides significant savings for a large range of values of $n$.

\noindent\emph{Shallow quantum circuits:} All the query algorithms that we present have the following property: For every probe, the state of the registers is prepared in superposition, a query is made, a unitary is applied to the registers and all the registers are measured in the computational basis. The lower bound applies to circuits that are allowed to maintain the registers in a superposition till the very end when the final measurement is made. It is remarkable that a scheme with shallow quantum circuits (complete measurement between any two queries) essentially matches the lower bound. It is known~\cite{HLS24} such a shallow circuit offers no advantage for other problems, e.g., database search.

\begin{figure}[t]
\centering
\begin{subfigure}[b]{\FigOneWidth}
\centering
\resizebox{\linewidth}{!}{%
\begin{tikzpicture}[
  x=1cm, y=1cm,
  >=Stealth,
  every node/.style={font=\small},
  gate/.style={draw, thick, fill=blue!4, minimum width=1.1cm, minimum height=1.35cm, align=center},
  oracle/.style={draw, thick, fill=orange!8, minimum width=0.85cm, minimum height=0.85cm, align=center},
  meas/.style={draw, thick, fill=green!6, minimum width=1.0cm, minimum height=0.85cm},
  wire/.style={thick},
  cwire/.style={thick, double, double distance=1.5pt},
  slash/.style={thick},
  brace/.style={decorate, decoration={brace, amplitude=6pt, mirror}, thick},
]
 
  \def\ytop{1.9}
  \def\ya{0.9}
  \def\yb{0}
  \def\gycenter{0.45}
  \def\gtopedge{1.125}
 
  \def\xa{1.2}
  \def\xb{4.0}
  \def\xc{6.8}
  \def\xd{9.6}
  \def\xmeas{11.6}
  \def\xend{13.0}
 
  \pgfmathsetmacro{\mA}{(\xa+\xb)/2}
  \pgfmathsetmacro{\mB}{(\xb+\xc)/2}
  \pgfmathsetmacro{\mC}{(\xc+\xd)/2}
 
  \draw[cwire] (-0.6,\ytop) -- (\xend-0.4,\ytop);
  \node[left] at (-0.7,\ytop) {$i$};
 
  \foreach \x in {\xa,\xb,\xc,\xd} {
    \draw[wire] (\x,\ytop) -- (\x,\gtopedge);
    \fill (\x,\ytop) circle (2.2pt);
  }
 
  \draw[wire] (-0.6,\ya) -- (\xa-0.55,\ya);
  \draw[wire] (-0.6,\yb) -- (\xa-0.55,\yb);
  \draw[slash] (-0.15,\ya-0.13) -- (0.05,\ya+0.13);
  \draw[slash] (-0.15,\yb-0.13) -- (0.05,\yb+0.13);
 
  \node[gate] (U1) at (\xa,\gycenter) {$U_1$};
 
  \draw[wire] (\xa+0.55,\ya) -- (\mA-0.425,\ya);
  \draw[slash] ({(\xa+0.55+\mA-0.425)/2-0.1},\ya-0.13) -- ({(\xa+0.55+\mA-0.425)/2+0.1},\ya+0.13);
  \draw[wire] (\mA+0.425,\ya) -- (\xb-0.55,\ya);
  \draw[slash] ({(\mA+0.425+\xb-0.55)/2-0.1},\ya-0.13) -- ({(\mA+0.425+\xb-0.55)/2+0.1},\ya+0.13);
  \draw[wire] (\xa+0.55,\yb) -- (\xb-0.55,\yb);
  \draw[slash] (\mA-0.1,\yb-0.13) -- (\mA+0.1,\yb+0.13);
 
  \node[oracle] (Oy1) at (\mA,\ya) {$O_y$};
  \node[gate] (U2) at (\xb,\gycenter) {$U_2$};
 
  \draw[wire] (\xb+0.55,\ya) -- (\mB-0.425,\ya);
  \draw[slash] ({(\xb+0.55+\mB-0.425)/2-0.1},\ya-0.13) -- ({(\xb+0.55+\mB-0.425)/2+0.1},\ya+0.13);
  \draw[wire] (\mB+0.425,\ya) -- (\xc-0.55,\ya);
  \draw[slash] ({(\mB+0.425+\xc-0.55)/2-0.1},\ya-0.13) -- ({(\mB+0.425+\xc-0.55)/2+0.1},\ya+0.13);
  \draw[wire] (\xb+0.55,\yb) -- (\xc-0.55,\yb);
  \draw[slash] (\mB-0.1,\yb-0.13) -- (\mB+0.1,\yb+0.13);
 
  \node[oracle] (Oy2) at (\mB,\ya) {$O_y$};
  \node[gate] (U3) at (\xc,\gycenter) {$U_3$};
 
  \draw[wire] (\xc+0.55,\ya) -- (\mC-0.425,\ya);
  \draw[slash] ({(\xc+0.55+\mC-0.425)/2-0.1},\ya-0.13) -- ({(\xc+0.55+\mC-0.425)/2+0.1},\ya+0.13);
  \draw[wire] (\mC+0.425,\ya) -- (\xd-0.55,\ya);
  \draw[slash] ({(\mC+0.425+\xd-0.55)/2-0.1},\ya-0.13) -- ({(\mC+0.425+\xd-0.55)/2+0.1},\ya+0.13);
  \draw[wire] (\xc+0.55,\yb) -- (\xd-0.55,\yb);
  \draw[slash] (\mC-0.1,\yb-0.13) -- (\mC+0.1,\yb+0.13);
 
  \node[oracle] (Oy3) at (\mC,\ya) {$O_y$};
  \node[gate] (U4) at (\xd,\gycenter) {$U_4$};
 
  \draw[wire] (\xd+0.55,\ya) -- (\xmeas-0.5,\ya);
  \draw[slash] (\xd+0.85,\ya-0.13) -- (\xd+1.05,\ya+0.13);
  \draw[wire] (\xd+0.55,\yb) -- (\xd+1.1,\yb);
  \draw[slash] (\xd+0.85,\yb-0.13) -- (\xd+1.05,\yb+0.13);
 
  \node[meas] (M) at (\xmeas,\ya) {};
  \draw[thick] (M.south west) ++(0.12,0.12) to[bend left=35] ++(0.55,0.5);
  \draw[thick,-{Stealth[length=2.2mm]}] (M.south west) ++(0.16,0.1) -- ++(0.6,0.62);
 
  \draw[wire] (\xmeas+0.5,\ya) -- (\xend,\ya);
  \node[right] at (\xend,\ya) {$x_i$};
 
  \draw[brace] (\mA-0.425,-0.35) -- (\mC+0.425,-0.35)
    node[midway, yshift=-0.55cm, align=center, font=\footnotesize]
    {3 \emph{adaptive} queries
    };

\end{tikzpicture}%
}
\caption{A quantum circuit corresponding to an adaptive query scheme that takes an index $i \in [m]$ as classical input and makes three probes to the oracle $O_y$. $U_1$, $U_2$, $U_3$ and $U_4$ are unitaries that depend on the input; the final measurement yields $x_i$.}
\label{fig:adaptive}
\end{subfigure}
\hfill
\begin{subfigure}[b]{\FigTwoWidth}
\centering
\resizebox{\linewidth}{!}{%
\begin{tikzpicture}[
  x=1cm, y=1cm,
  >=Stealth,
  every node/.style={font=\small},
  gate/.style={draw, thick, fill=blue!4, minimum width=1.3cm, minimum height=5.2cm, align=center},
  oracle/.style={draw, thick, fill=orange!8, minimum width=0.85cm, minimum height=0.85cm, align=center},
  meas/.style={draw, thick, fill=green!6, minimum width=1.0cm, minimum height=0.85cm},
  wire/.style={thick},
  cwire/.style={thick, double, double distance=1.5pt},
  slash/.style={thick},
]
 
  \def\ytop{5.3}
  \def\yone{4.0}
  \def\ytwo{2.9}
  \def\ythree{1.8}
  \def\yfour{0.7}
  \def\ybypass{-0.5}
  \def\gycenter{1.75}
  \def\gtopedge{4.35}
 
  \def\xa{1.0}
  \def\xb{7.4}
  \def\xmeas{9.4}
  \def\xend{10.8}
 
  \pgfmathsetmacro{\mAB}{(\xa+\xb)/2}
 
  \draw[cwire] (-0.6,\ytop) -- (\xend-0.4,\ytop);
  \node[left] at (-0.7,\ytop) {$i$};
  \foreach \x in {\xa,\xb} {
    \draw[wire] (\x,\ytop) -- (\x,\gtopedge);
    \fill (\x,\ytop) circle (2.2pt);
  }
 
  \foreach \y in {\yone,\ytwo,\ythree,\yfour,\ybypass} {
    \draw[wire] (-0.6,\y) -- (\xa-0.65,\y);
  }
  \draw[slash] (-0.15,\ybypass-0.13) -- (0.05,\ybypass+0.13);
 
  \node[gate] (U1) at (\xa,\gycenter) {$U_1$};
 
  \foreach \y [count=\i] in {\yone,\ytwo,\ythree,\yfour} {
    \draw[wire] (\xa+0.65,\y) -- (\mAB-0.425,\y);
    \draw[slash] ({(\xa+0.65+\mAB-0.425)/2-0.1},\y-0.13) -- ({(\xa+0.65+\mAB-0.425)/2+0.1},\y+0.13);
    \node[oracle] (Oy\i) at (\mAB,\y) {$O_y$};
    \draw[wire] (\mAB+0.425,\y) -- (\xb-0.65,\y);
    \draw[slash] ({(\mAB+0.425+\xb-0.65)/2-0.1},\y-0.13) -- ({(\mAB+0.425+\xb-0.65)/2+0.1},\y+0.13);
  }
 
  \draw[wire] (\xa+0.65,\ybypass) -- (\xb-0.65,\ybypass);
  \draw[slash] (\mAB-0.1,\ybypass-0.13) -- (\mAB+0.1,\ybypass+0.13);
 
  \node[gate] (U2) at (\xb,\gycenter) {$U_2$};
 
  \draw[wire] (\xb+0.65,\yone) -- (\xmeas-0.5,\yone);
 
  \node[meas] (M) at (\xmeas,\yone) {};
  \draw[thick] (M.south west) ++(0.12,0.12) to[bend left=35] ++(0.55,0.5);
  \draw[thick,-{Stealth[length=2.2mm]}] (M.south west) ++(0.16,0.1) -- ++(0.6,0.62);
  \draw[wire] (\xmeas+0.5,\yone) -- (\xend,\yone);
  \node[right] at (\xend,\yone) {$x_i$};
 
  \foreach \y in {\ytwo,\ythree,\yfour,\ybypass} {
    \draw[wire] (\xb+0.65,\y) -- (\xb+1.3,\y);
  }
  \draw[slash] (\xb+0.95,\ybypass-0.13) -- (\xb+1.15,\ybypass+0.13);
 
  \node[fit=(Oy1)(Oy2)(Oy3)(Oy4), draw, dashed, thick, inner sep=6pt, rounded corners] (grp) {};
 
  \draw[dashed] (grp.south) -- (\mAB,-1.3);
  \node[align=center, font=\footnotesize, text width=5.4cm] at (\mAB,-1.75)
    {4 \emph{non-adaptive} queries};
 
\end{tikzpicture}%
}
\caption{A non-adaptive query scheme that makes four probes to the oracle $O_y$, where all probes are fixed in advance and do not depend on any answers.}
\label{fig:nonadaptive}
\end{subfigure}
\end{figure}

\section*{AI statement}
{We used OpenAI's ChatGPT (Sol and Astra) and Anthropic's Claude (Opus 5.5) for mathematical exploration, proof development, literature searches, and copy editing; the figures were generated by an AI agent from handwritten drawings supplied by the authors. Before consulting AI systems, we had obtained the $O(n\sqrt{m})$ two-probe adaptive scheme based on Bernstein--Vazirani and a non-adaptive scheme with the optimal $m^{1/t}$ dependence on $m$. The $\widetilde{O}(\sqrt{mn})$ adaptive bound was developed through iterative discussions with ChatGPT from an idea supplied by the authors, while the ideas underlying the $\widetilde{O}(n\sqrt{m})$ non-adaptive bound---private neighborhoods and their use with Deutsch--Jozsa---were supplied by the authors and ChatGPT assisted in developing the proof; the extension to $t$ probes used a recursive template previously developed by the authors. The manuscript was written by the authors, with AI-assisted copy editing, and the authors retain full responsibility for its correctness, originality, attribution, and integrity.}

\section{Preliminaries}
\newcommand{\wt}{\mathrm{wt}}
We write $[m]$ for $\{1, \ldots, m\}$. For a string $x \in \{0,1\}^m$, its Hamming weight $\wt_H(x)$ is the number of $1$s in it. We write $B(m,n)$ for the set of all strings in $\{0,1\}^m$ with weight at most $n$, that is, $B(m,n)=\{x \in \{0,1\}^m: \wt_H(x) \leq n\}$.

\begin{definition}[Compression schemes with adaptive and non-adaptive retrieval]
An $(m,n,s,t)$-compression scheme $S$ consists of a \emph{storage scheme} $f_S$ and a query circuit $C_S$. The storage scheme is a function 
\[  f_S: B(m,n) \rightarrow \{0,1\}^s.\]
The query circuit $C_S$ is a quantum circuit that consists of the following:
\begin{enumerate}
    \item A special classical register $I$ of $\log_2 m$ bits capable of carrying an index $i \in [m]$;
    \item Arbitrary unitary operations on the quantum registers that are controlled by the index register $I$;
    \item At most $t$ oracle gates of the form $O_y$ for some $y \in \{0,1\}^s$ (note all oracle gates in the circuit are for the same string $y$), which acts as follows: 
    \[ O_y: \ket{j} \rightarrow (-1)^{y_j} \ket{j}.\]
    \item A measurement at the end that yields an outcome in $\{0,1\}$.
\end{enumerate}
The query scheme is non-adaptive (see~\cref{fig:nonadaptive}) if the oracle gates are applied in parallel; alternatively, we may assume that the query circuit is allowed to use a \emph{product oracle} gate of the form $O_y^{\otimes t'}$ (for some $t' \leq t$), which appears exactly once in the circuit. If the query scheme is allowed to use the outcome of an earlier application of $O_y$ to determine the query fed into a later $O_y$, then we say that it is adaptive (see~\cref{fig:adaptive}).
A compression scheme is \emph{valid} if for all $x \in \{0,1\}^m$ of Hamming weight at most $n$ and all $i \in [m]$, the circuit $C$ with the classical register initialized to $i$ and all its oracle gates as $O_{f(x)}$ yields $x_i$ (with probability $1$) as the outcome of the final measurement. 
Let $s_Q(m,n,t)$ be the smallest $s$ such that there exists a valid $(m,n,s,t)$-compression scheme. Let $s_{QN}(m,n,t)$ be the smallest $s$ such that there exists a valid $(m,n,s,t)$-compression scheme whose query scheme is non-adaptive.
\end{definition}

\begin{remark}
We have modelled quantum query algorithms as circuits. In our proofs however, we will not present the circuits explicitly. We will describe them in informal langauge as a sequence of steps, some of which will be well-known quantum algorithms, e.g., the Bernstein-Vazirani algorithm or the Deutsch-Jozsa algorithm. It will be straightforward to see that one can represent these informally stated quantum algorithms as formal quantum circuits.
\end{remark}

\subsection{Expander graphs}

In our schemes, we will use certain bipartite expander graphs. It will be convenient to represent our bipartite graphs as functions of the form
\[ G: L \times D \rightarrow R,\]
where $L$ represents the left vertices of $G$ and $R$ the right vertices of $G$. In these bipartite graphs, the vertices of $L$ will have $|D|$ edges leaving them and will be labeled by distinct elements of $D$. 
Often $L$ and $R$ will be sets of the form $[m]$, $[k]$, etc., so $L$ and $R$ will not be formally disjoint 
, but when referring to a vertex it will be clear from the context whether a vertex of $L$ or a vertex of $R$ is meant. We use $\Gamma_G(v)$ to refer to the neighbors of a vertex $v\in L$, that is,
\[ \Gamma_G(v) = \{ w \in R: \exists j \in D \, G(v,j) =w\}.\]
The existence of expander graphs with appropriate parameters is established in the following lemmas. 
Note that when these graphs are used in the query algorithm, they will be assumed to be hardwired into the underlying circuit (independent of the input), and will not be stored as part of the compressed string. 
\begin{lemma} \label{lm:expander}
For all $n$ and $k$, there is a bipartite graph $G: [k] \times [d] \rightarrow [2n]$ with $d=O(\max\{\log (k/n),1\})$ such that for every set $A \subseteq [k]$ of size at most $n$, we have $|\Gamma_G(A)|\geq |A|$. 
\end{lemma}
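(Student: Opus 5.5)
We prove \cref{lm:expander} by the probabilistic method, choosing each of the $d$ edges out of every left vertex independently and uniformly at random in $[2n]$, and then taking a union bound over all possible violations of Hall's condition. The trivial case comes first. If $k \le 2n$, take $d=1$ and let $G(v,1)=v$; this is injective, so $|\Gamma_G(A)|=|A|$ for every $A$. So assume $k>2n$, and set $d = c\max\{\log(k/n),1\}$ for a suitable absolute constant $c$.

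A failure occurs exactly when some set $A\subseteq[k]$ with $|A|=a\le n$ has all its $ad$ edges landing inside a set $B\subseteq[2n]$ with $|B|=a-1$; it suffices to consider $|B|=a-1$ because a smaller neighborhood lies inside some set of that size. For fixed $A$ and $B$, the probability that every edge lands in $B$ is $\left(\frac{a-1}{2n}\right)^{ad}\le \left(\frac{a}{2n}\right)^{ad}$. Summing over all choices, the failure probability is at most
\[
\sum_{a=1}^{n} \binom{k}{a}\binom{2n}{a-1}\left(\frac{a}{2n}\right)^{ad}
\le \sum_{a=1}^{n} \left(\frac{ek}{a}\right)^{a}\left(\frac{2en}{a}\right)^{a}\left(\frac{a}{2n}\right)^{ad}.
\]
The $a=1$ term is exactly $0$, since then $|B|=0$. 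For $a\ge 2$ the $a$-th term equals $\left[e^2 \frac{k}{a}\cdot\frac{2n}{a}\cdot\left(\frac{a}{2n}\right)^{d}\right]^{a}$.

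The main step is to show that the bracket is at most $1/2$ uniformly in $a$. Write $a/(2n)=\alpha\le 1/2$. The bracket then becomes $e^2 \frac{k}{2n}\,\alpha^{d-2}$. Because $\alpha\le 1/2$, this is at most $e^2 \frac{k}{2n}2^{-(d-2)}$. Choosing $d = \log_2(k/n)+C$ for a constant $C$ makes it at most $1/2$, and since $k>2n$ this choice is $O(\log(k/n))$. Then the sum is at most $\sum_{a\ge2}2^{-a}<1$, so some choice of $G$ satisfies Hall's condition for every $|A|\le n$.

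The only delicate point is the middle range, where $a$ is comparable to $n$ and $\alpha$ is close to $1/2$. There the crude bound $\alpha\le1/2$ is exactly what is needed, so a uniform treatment works without any splitting into cases. Multi-edges (repeated labels landing on the same right vertex) are harmless, since the lemma only counts distinct neighbors.
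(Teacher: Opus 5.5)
Your proof is correct and follows essentially the paper's argument: random edges into $[2n]$, a union bound over pairs $(A,B)$ using $\binom{k}{a}\le(ek/a)^a$ and $\binom{2n}{a}\le(2en/a)^a$, and the uniform estimate $a/(2n)\le 1/2$, which bounds each term by $\bigl(e^2\tfrac{k}{2n}2^{-(d-2)}\bigr)^a$ so that $d=\log_2(k/n)+O(1)$ suffices; your choice $|B|=a-1$ is slightly tighter than the paper's $|B|=\ell$. Two small slips: the $a=1$ term is zero only before you replace $(a-1)/(2n)$ by $a/(2n)$, so start the sum at $a=2$ (a single vertex always has a neighbor), and the step $\binom{2n}{a-1}\le(2en/a)^a$ needs the one-line justification $\binom{2n}{a-1}\le\binom{2n}{a}$ for $a\le n$.
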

\begin{remark}
    In particular, for graphs promised by the above lemma, by Hall's theorem, 
    for every set $A \subseteq [k]$ of size at most $n$, there is a matching in $G$ that saturates $A$. That is, for every such set $A$, we can find labels $(e_a \in [d]: a \in A)$ such that $(G(a,e_a): a \in A)$ is a sequence of $|A|$ distinct elements in $[2n]$.
\end{remark}
\begin{proof}[Proof of \cref{lm:expander}]
For $k \leq 2n$, we may take $d=1$ and set $G(i,1)=i$ for all $i$. Clearly, this bipartite graph has a matching that saturates $[k]$. So, we assume $k\geq 2n$. 

Pick $G(i,j) \in [2n]$ independently for each pair $(i,j)$ in the domain. By the union bound, the probability that $G$ fails to provide the required expansion is bounded by 
\begin{align*}
 \Pr[\exists \ell\leq n, A\in {[k] \choose \ell}, B\in {[2n] \choose \ell}: \Gamma_G(A) \subseteq B]   
   &\leq \sum_{\ell=2}^n {k \choose \ell} {2n \choose \ell} \left( 
    \frac{\ell}{2n}
    \right)^{\ell d}\\
   &\leq \sum_{\ell=2}^n \left(\frac{ek}{\ell}\right)^\ell 
                         \left(\frac{2en}{\ell}\right)^{\ell}
                         \left(\frac{\ell}{2n}\right)^{\ell d}\\
  &\leq \sum_{\ell=2}^n \left(\left(\frac{2e^2 kn}{\ell^2}\right) \left(\frac{\ell}{2n}\right)^{d} \right)^{\ell}\\
  &\leq \sum_{\ell=2}^n \left(\left(\frac{2e^2 k}{n}\right) 
                        \left(\frac{1}{2}\right)^{d-2} \right)^\ell.
\end{align*}
By setting $d = 3 + \log_2(2e^2 k/n)$, we can ensure that the right hand side is strictly less than $1$. It follows that the random $G$ has the required property with positive probability.
\end{proof}
\begin{lemma} \label{lm:expander2}
For all $n$ and $k$, there is a bipartite graph $G: [k] \times [d] \rightarrow [10nd]$ with $d= O(n\max\{\log (k/n), 1\})$ such that 
\begin{enumerate}
    \item[(i)] $d$ is a power of $2$;
    \item[(ii)] for every $a \in [k]$, we have $|\Gamma_G(a)|=d$;
    \item[(iii)] for every set $A \subseteq [k]$ of size at most $n$ and every $a \in A$, we have
    \[ \left|\Gamma_G(a) \setminus \bigcup_{a' \in A\setminus\{a\}} \Gamma_G(a')\right| \geq \frac{d}{2}.\]
\end{enumerate}
\end{lemma}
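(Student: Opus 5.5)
We use the probabilistic method, as in the proof of \cref{lm:expander}. The case $k \le n$ (or generally small $k$) can be handled directly: give each $a\in[k]$ its own disjoint block of $d$ right vertices, which needs $kd \le 10nd$ vertices. So assume $k > n$.

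\textbf{Construction.} Fix $d$ to be the smallest power of $2$ that is at least $c\,n\max\{\log(k/n),1\}$, for a constant $c$ chosen later; then $d=O(n\max\{\log(k/n),1\})$. Let $R=[10nd]$ and partition $R$ into $d$ disjoint blocks $R_1,\dots,R_d$, each of size $10n$. For every $a\in[k]$ and $j\in[d]$, choose $G(a,j)$ uniformly from $R_j$, independently over all pairs $(a,j)$. Since the $d$ neighbours of $a$ lie in distinct blocks, they are automatically distinct. This gives (ii) with certainty, and (i) holds by the choice of $d$.

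\textbf{Bounding one bad event.} Fix $a$ and a set $A'\subseteq[k]\setminus\{a\}$ with $|A'|\le n-1$. For each label $j$, the edge $G(a,j)$ is \emph{shared} if it coincides with some $G(a',j)$ for $a'\in A'$. Two neighbours of $a$ and $a'$ lying in different blocks can never collide, so only same-label collisions matter. Hence the events ``label $j$ is shared'' are independent across $j$, and each has probability at most $|A'|/(10n)\le 1/10$. Property (iii) for the pair $(a,A')$ fails only if more than $d/2$ labels are shared. By a Chernoff bound (or the crude bound $\binom{d}{d/2}(1/10)^{d/2}\le (4/10)^{d/2}$), this probability is at most $2^{-\alpha d}$ for an absolute constant $\alpha>0$.

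\textbf{Union bound.} It suffices to consider $A$ of size exactly $\min(n,k)$, because enlarging $A'$ only shrinks the set of private neighbours. The number of pairs $(a,A')$ is at most $k\binom{k}{n-1}\le k\,(ek/n)^{n}$, whose logarithm is $O(n\log(k/n)+\log k)$. We need this to be less than $\alpha d$. The $\log k$ term requires care: $\log k\le \log n+\log(k/n)$, and $\log n\le n$, so $\log k = O(n\max\{\log(k/n),1\})$. A sufficiently large constant $c$ therefore makes the total failure probability less than $1$, and some choice of $G$ satisfies (i)–(iii).

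\textbf{Main obstacle.} The delicate step is getting independence across labels so that a Chernoff-type bound applies. Using blocks indexed by the label is what achieves this, and it is also why the right side has size $10nd$ rather than $O(n)$. The other point needing attention is absorbing the $\log k$ term into $d$ when $k/n$ is small, which the $\max\{\cdot,1\}$ in the statement of $d$ accommodates.
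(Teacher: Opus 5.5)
Your proof is correct, and its skeleton matches the paper's: the probabilistic method, a reduction to $|A'|=n-1$, a tail bound for a single pair $(a,A')$, a union bound over all such pairs, and $d=\Theta(n\max\{\log(k/n),1\})$ rounded up to a power of $2$, with the constant term absorbing the stray $\log k$ (or $\log n$).

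The genuine difference is the random model.
\begin{itemize}
\item The paper draws each $\Gamma_G(a)$ as a uniformly random $d$-subset of $[10nd]$. It then conditions on the neighbourhoods of $A'$, whose union $N$ has at most $nd$ elements, and bounds $\Pr[|\Gamma_G(a)\cap N|>d/2]$ by Chv\'atal's hypergeometric tail bound, getting $e^{-d/4}$.
\item You split $[10nd]$ into $d$ blocks of size $10n$ and draw $G(a,j)$ from block $R_j$. In effect the graph is $d$ independent random hash functions $[k]\to[10n]$. Collisions can then occur only within a label, the shared-label indicators are independent with probability at most $1/10$ each, and the elementary bound $\binom{d}{d/2}10^{-d/2}\le(0.4)^{d/2}$ suffices. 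Property (ii) holds automatically.
\end{itemize}
Your route is self-contained and even gives a slightly better exponent. The paper's is shorter once Chv\'atal's bound is available.

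Two remarks in your ``main obstacle'' paragraph are overstated.
\begin{itemize}
\item Independence across labels is a convenience, not a necessity. For a uniform $d$-subset, the probability that $s$ specified sample positions all land in $N$ is at most $(|N|/10nd)^s\le 10^{-s}$. So the same $\binom{d}{d/2}(1/10)^{d/2}$ estimate holds without replacement; Chv\'atal's bound likewise shows hypergeometric tails obey binomial-type Chernoff estimates.
\item The right side having size $\Theta(nd)$ is not an artifact of blocking. Any such graph needs it, since $n$ left vertices each with $d/2$ private neighbours force at least $nd/2$ right vertices.
\end{itemize}
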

\begin{proof}[Proof of \cref{lm:expander2}]
If $k\leq 10n$, it is straightforward to build such a graph $G$, where all vertices in $[k]$ are assigned distinct $n$ neighbors. So we may assume that $k>10n$.

Construct a random $G$ as follows. For each $a \in [k]$, pick the $d$ elements $(G(a,\ell): \ell \in [d])$ \emph{without replacement} from $[10dn]$. It is enough to establish that for every $a \in [k]$ and for every $A' \subseteq [k]\setminus \{a\}$ with $n-1$ elements, we have 
\begin{equation}
 \left|\Gamma_G(a) \cap \bigcup_{a' \in A'} \Gamma_G(a')\right| \leq \frac{d}{2}. \label{eq:property}
 \end{equation}
We will use a union bound over the choices of $a$ and $A'$. 
First fix $a$ and $A'$. Fix the random choices for the neighborhoods of the elements $a' \in A'$. Let $N = \bigcup_{a' \in A'} \Gamma_G(a')$; then, $|N| 
\leq nd$.  When the random choices for the neighbors of $a$ are made, $\Gamma_G(a)$ is a uniformly chosen random set from $[10nd]$ with $d$ elements; then, $\E[|\Gamma_G(a) \cap N|] \leq nd^2/(10nd) = d/10$. By Chv\'{a}tal's bound~\cite{Chvatal79}
\[ \Pr\left[\left|\Gamma_G(a) \cap \bigcup_{a' \in A'} \Gamma_G(a')\right| > \frac{d}{2}\right] \leq \exp\left(-2 (2/5)^2 d\right) \leq \exp(-d/4).\]
Taking a union bound over the $n {k \choose n}$ choices for $(a,A')$, we conclude that the probability that $G$ fails to have the desired property is at most
\[
   n{k \choose n} \exp(-d/4)  \leq n\left(\frac{ek}{n}\right)^n \exp(-d/4).
\] 
Set $d$ to be a power of $2$ at least $4 n (\log (ek/n) + 1)$, so that
the right-hand side is less than $1$. Thus, $G$ has the required properties with positive probability. 
\end{proof}

\section{Two-probe schemes}

As mentioned in the Introduction, the new ideas that lead to \cref{thm:tProbeAdaptive,thm:tProbeNonAdaptive} are already contained in the case of two probes. In this section, we describe these two-probe schemes. 
Each of these schemes is built upon one well-known quantum element.
The adaptive scheme uses the Bernstein-Vazirani algorithm to extract a string $z \in \{0,1\}^{\ell}$ from its $2^\ell$-bit Hadamard encoding $h(z) \in \{0,1\}^{2^\ell}$ by making \emph{just one} quantum probe into the oracle $O_{h(z)}$. 
The non-adaptive scheme uses the Deutsch-Jozsa algorithm to tell apart strings $z \in \{0,1\}^{2^\ell}$ that have exactly $2^{\ell-1}$ $1$s from those that have no $1$s at all, by making \emph{just one} quantum probe into the oracle
$O_z$. 

\begin{figure}[t]
\centering
\scalebox{\FigScale}{%
\begin{tikzpicture}[
  >=Stealth,
  every node/.style={font=\small},
  cell/.style={draw, thick, minimum height=0.9cm, inner sep=2pt},
  pink/.style={fill=red!12},
  grn/.style={fill=green!25},
  wht/.style={fill=white},
]
 
  \def\w{1.0}      %
  \def\wdots{2.4}  %
  \pgfmathsetmacro{\ea}{0}
  \pgfmathsetmacro{\eb}{\ea+\w}
  \pgfmathsetmacro{\ec}{\eb+\w}
  \pgfmathsetmacro{\ed}{\ec+\w}
  \pgfmathsetmacro{\ee}{\ed+\w}
  \pgfmathsetmacro{\ef}{\ee+\w}
  \pgfmathsetmacro{\eg}{\ef+\w}
  \pgfmathsetmacro{\eh}{\eg+\wdots}
  \pgfmathsetmacro{\ca}{\ea+\w/2}
  \pgfmathsetmacro{\cb}{\eb+\w/2}
  \pgfmathsetmacro{\cc}{\ec+\w/2}
  \pgfmathsetmacro{\cd}{\ed+\w/2}
  \pgfmathsetmacro{\ceE}{\ee+\w/2}
  \pgfmathsetmacro{\cf}{\ef+\w/2}
  \pgfmathsetmacro{\cg}{\eg+\wdots/2}
  \pgfmathsetmacro{\ch}{\eh+\w/2}
 
  \def\yx{7.0}   %
  \def\ym{5.5}   %
  \def\yn{4.0}   %
 
  \node[left] at (\ea-0.35,\yx) {$x=$};
  \node[left] at (\ea-0.35,\ym) {$M_1=$};
  \node[left] at (\ea-0.35,\yn) {$M_2=$};
 
  \node[cell,pink,minimum width=\w cm] (x1) at (\ca,\yx) {$b_1$};
  \node[cell,grn, minimum width=\w cm] (x2) at (\cb,\yx) {$b_2$};
  \node[cell,pink,minimum width=\w cm] (x3) at (\cc,\yx) {$b_3$};
  \node[cell,grn, minimum width=\w cm] (x4) at (\cd,\yx) {$b_4$};
  \node[cell,grn, minimum width=\w cm] (x5) at (\ceE,\yx) {$b_5$};
  \node[cell,pink,minimum width=\w cm] (x6) at (\cf,\yx) {$b_6$};
  \node[cell,wht, minimum width=\wdots cm] (x7) at (\cg,\yx) {$\cdots$};
  \node[cell,grn, minimum width=\w cm] (x8) at (\ch,\yx) {$b_{m_1}$};
 
  \node[cell,pink,minimum width=\w cm] (m1a) at (\ca,\ym) {$h(0)$};
  \node[cell,grn, minimum width=\w cm] (m1b) at (\cb,\ym) {$h(1)$};
  \node[cell,pink,minimum width=\w cm] (m1c) at (\cc,\ym) {$h(0)$};
  \node[cell,grn, minimum width=\w cm] (m1d) at (\cd,\ym) {$h(2)$};
  \node[cell,grn, minimum width=\w cm] (m1e) at (\ceE,\ym) {$h(3)$};
  \node[cell,pink,minimum width=\w cm] (m1f) at (\cf,\ym) {$h(0)$};
  \node[cell,wht, minimum width=\wdots cm] (m1g) at (\cg,\ym) {$\cdots$};
  \node[cell,grn, minimum width=\w cm] (m1h) at (\ch,\ym) {$h(4)$};
 
  \node[cell,grn, minimum width=\w cm] (n1) at (\ca,\yn) {$b_2$};
  \node[cell,grn, minimum width=\w cm] (n2) at (\cb,\yn) {$b_4$};
  \node[cell,grn, minimum width=\w cm] (n3) at (\cc,\yn) {$b_5$};
  \node[cell,grn, minimum width=\w cm] (n4) at (\cd,\yn) {$b_{m_1}$};
  \node[cell,pink,minimum width=\w cm] (n5) at (\ceE,\yn) {$0$};
  \node[cell,pink,minimum width=\w cm] (n6) at (\cf,\yn) {$0$};
 
  \draw[->, thick, blue!70!black] (m1b.south) -- (n1.north);
  \draw[->, thick, blue!70!black] (m1d.south) -- (n2.north);
  \draw[->, thick, blue!70!black] (m1e.south) -- (n3.north);
  \draw[->, thick, blue!70!black] (m1h.south) to[out=250,in=20] (n4.east);
 
\end{tikzpicture}%
}
\caption{The basic template for the two-probe adaptive scheme}
\label{fig:compaction}
\end{figure}

\subsection{The adaptive two-probe scheme} 
Let $x$ be a string in $B(m,n)$ that we wish to compress.
Partition the sequence $x$ into $m_1$ equal substrings $b_1, b_2, \ldots, b_{m_1}$ of $m/{m_1}$ bits each; we will choose the $m_1$ later in order to obtain a good compression. Since $x$ has weight at most $n$, at most $n$ of the $b_i$ will be non-zero, say $b_{i_1}$, $b_{i_2}$, $\ldots$, $b_{i_r}$, where $i_1 < i_2 < \cdots < i_r$. The idea is to only store these non-zero substrings.
Let $M_2$ be a string of $(m/m_1)n$ bits obtained by concatenating $b_{i_1}$, $b_{i_2}$, $\ldots$, $b_{i_r}$ in that order and, and adding zeroes at the end if $r < n$. When presented an index $i \in [m]$, we need to determine where the substring containing $x_i$ is stored. This will be achieved by making an initial quantum probe. For the first probe, we store for each of the $m_1$ substrings $b_j$ the position in $M_2$, where the that substring is stored, or $0$ if the block was not a part of $M_2$, because it was all zeroes. 
More precisely, for $u=1,2,\ldots, m_1$, let $M_1[u]$ be the Hadamard encoding $h(j)$ for the unique $j$ such that $i_j = u$; 
if no such $i_j$ exists (because $b_u$ is all zeroes) set $M_1[u]=h(0)$. The query algorithm is straightforward. In order to determine $x_i$, it first determines the index $r$, such that $b_r$ contains $x_i$; this can be done using standard deterministic classical computation and requires no probe to the oracle. Then, the algorithm reads the number $j$ from $M_1[r]$ using the Bernstein-Vazirani algorithm; if $j$ is $0$ then the query returns $0$; otherwise, the answer is obtained by reading the appropriate bit of $M_2[j]$. We summarize our observations as follows (see \cref{fig:compaction}).
\begin{lemma}
    Fix $m_1\leq m/2$. There is a valid adaptive two-probe compression scheme that compresses a string $x \in \{0,1\}^m$ into a string of the form $M_1 \circ M_2$, where $|M_1|= 2 m_1 (n+1)$ bits and $|M_2| = (m/m_1) n$. The query algorithm can retrieve any bit of $x$ by making one quantum probe into $M_1$ followed by a classical probe into $M_2$.
\end{lemma}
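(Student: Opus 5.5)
We verify the lemma by checking the construction above: it is a valid storage map, it meets the stated lengths, and the query algorithm returns $x_i$ with certainty. We fix the encoding first. Each entry $M_1[u]$ is the Hadamard encoding of an integer $j \in \{0,1,\ldots,n\}$, written as an $\ell$-bit string $z$ with $\ell = \lceil \log_2(n+1)\rceil$. Thus $h(z) = (z\cdot w \bmod 2)_{w \in \{0,1\}^\ell}$ has $2^\ell \le 2(n+1)$ bits. Concatenating the $m_1$ blocks gives $|M_1| \le 2m_1(n+1)$, padded with zeros if needed to make it exact. The string $M_2$ holds at most $n$ nonzero blocks of length $m/m_1$, padded to exactly $n$ blocks, so $|M_2| = (m/m_1)n$. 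The map $x \mapsto M_1\circ M_2$ depends only on $x$, which makes it a legitimate storage function $f_S$ on $B(m,n)$. The hypothesis $m_1 \le m/2$ is needed only so that blocks are nontrivial; we assume $m_1$ divides $m$, or round block lengths up.

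For the first probe, given $i$ the circuit classically computes $u = \lceil i m_1/m\rceil$ and the offset $p$ of $i$ inside $b_u$. It then runs Bernstein--Vazirani on the sub-oracle restricted to the $2^\ell$ positions of $M_1[u]$. To do this it prepares $2^{-\ell/2}\sum_w \ket{\mathrm{off}(u)+w}$, where $\mathrm{off}(u)$ is the starting address of $M_1[u]$ in $y$. One call to $O_y$ produces $2^{-\ell/2}\sum_w (-1)^{z\cdot w}\ket{\mathrm{off}(u)+w}$. The circuit then applies the unitary that subtracts the fixed offset $\mathrm{off}(u)$, controlled by $I$, followed by $H^{\otimes \ell}$. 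This yields $\ket{z}$ exactly, so measuring gives $j$ with probability $1$. This exactness is the key point of the argument: the phase oracle for $y$ restricted to a contiguous window is precisely the phase oracle for $h(z)$, and Bernstein--Vazirani is error-free.

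The second probe depends on the value of $j$. If $j=0$, then by construction $b_u$ is all zeros, so $x_i = 0$ and the algorithm outputs $0$ without probing. If $j \ge 1$, then $i_j = u$, so block $b_u$ sits in positions $(j-1)(m/m_1)+1,\ldots, j(m/m_1)$ of $M_2$. The algorithm classically queries the bit at address $|M_1| + (j-1)(m/m_1) + p$. A classical query is simulated by a single phase-oracle call on the basis state, with an ancilla in $\ket{-}$ or by Hadamard-controlled phase kickback, and it returns $y$ at that address, which equals $x_i$. The second query depends on the measured $j$, so the scheme is adaptive, with at most two probes in total.

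The only step requiring any care is the first probe: we must confirm that one phase query on a window of $y$ reproduces the Bernstein--Vazirani state, and that offset handling is a fixed unitary controlled by $I$. Both facts follow directly from the definition of $O_y$. The remaining checks are bookkeeping on the lengths.
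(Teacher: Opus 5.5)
Your construction and analysis are the paper's: blocks $b_u$, $M_1[u]=h(j)$ with $j$ the rank of $u$ among the nonzero blocks (or $0$), $M_2$ the padded concatenation of the nonzero blocks, Bernstein--Vazirani on the window $M_1[u]$, then one read in $M_2$. Your length count and your Bernstein--Vazirani calculation are correct.

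The one step whose justification fails as written is the ``classical'' probe into $M_2$. In the paper's model the only oracle gate is the uncontrolled phase oracle $O_y\colon\ket{j}\mapsto(-1)^{y_j}\ket{j}$. There is no target register to put into $\ket{-}$; that trick turns an XOR oracle into a phase oracle, not the reverse. No controlled-$O_y$ is provided either, and $O_y\ket{a}=(-1)^{y_a}\ket{a}$ is only a global phase. In fact $O_{\bar y}=-O_y$, so no circuit can tell $y$ from $\bar y$. A bit of $y$ can therefore be read only by interfering with an address whose content is known in advance.

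The repair uses your own encoding. Every Hadamard codeword satisfies $h(z)_{0^\ell}=0$, so the address $\mathrm{off}(u)$ always holds $0$. With $a=|M_1|+(j-1)(m/m_1)+p$, query $(\ket{\mathrm{off}(u)}+\ket{a})/\sqrt2$. This becomes $(\ket{\mathrm{off}(u)}+(-1)^{y_a}\ket{a})/\sqrt2$. Now measure in the basis $(\ket{\mathrm{off}(u)}\pm\ket{a})/\sqrt2$, which is a fixed unitary determined by $i$ and the measured $j$. This yields $y_a=x_i$ with probability $1$ using one oracle call.

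This is what your ``controlled phase kickback'' amounts to once the control's other branch is routed to a known-zero address. The paper itself glosses over this point (it only says ``reading the appropriate bit of $M_2[j]$''). With this fix your proof is complete and more detailed than the paper's.
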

With this we may set $m_1= \sqrt{m}$ and immediately conclude that $S_Q(m,n,t=2) = O(n\sqrt{m})$. This, falls short of the near optimal $O(\sqrt{mn \log (m/n)})$ bound claimed in \cref{thm:tProbeAdaptive} for $n \gg \log (m/n)$.

In the above scheme, the $i$-th entry of $n+1$ bits in $M_1$ revealed the destination for the substring $M_1$; if the set of possible destinations of each $b_i$ could be made smaller, then we could manage with fewer than $n+1$ bits per entry in $M_1$. 
To get the near optimal bound claimed in \cref{thm:tProbeAdaptive}, we limit the possible destination
using a bipartite expander graph $G: [m_1] \times [d] \rightarrow [2n]$
with the property that for every set $A \subseteq [m_1]$ of $n$ vertices, there is a matching in $G$ that saturates $A$. Such a bipartite graph with $d= O(\log (m_1/n))$ is promised in \cref{lm:expander} (also see the remark immediately after the lemma). Fix such a graph. We find a matching in this bipartite graph that saturates the set $A = \{i_1,i_2,\ldots,i_r\}$ of indices of the non-zero substrings identified above.  Let $\{(i_1, G(i_1,\ell_1)), (i_2, G(i_2,\ell_2)), \ldots
(i_r, G(i_r,\ell_r))\}$ be the matching. For $u=1,2,\ldots,r$, we store $h(\ell_u)$ in $M_1[i_u]$; we store $h(0)$ in the the other positions in $M_1$.
We next describe how we use the matching to arrange the substrings $(b_{i_u}: u=1,2\ldots,r)$ in $M_2$. We imagine $M_2$ to be an array of $2n$ entries, where each entry is string of $m/m_1$ bits. Note that for $u=1,2,\ldots,r$, the numbers $G(i_u,,\ell_u)$ are distinct elements of $[2n]$. We set $M_2[G(i_u,\ell_u)] = b_{i_u}$; that is, we route the non-empty substring $b_{i_u}$ to the position in $M_2$ corresponding the the vertex in $[2n]$ that $i_u$ is matched to. To retrieve $x_i$ from $M_1\circ M_2$, the query algorithm determines the substring $b_j$ that $x_i$ is part of (this needs no probes), determines $\ell_j$ using one quantum probe to $M_1$, and finally reads one bit from the appropriate position of $M_2[G(j,\ell_j)]$. 
We summarize the above observation as follows. 

\begin{lemma} \label{lm:adaptive}
    Fix $m_1\leq m/2$. There is a valid adaptive two-probe compression scheme that compresses a string $x \in \{0,1\}^m$ into a string of the form $M_1 \circ M_2$, where $|M_1|= O(m_1 \log (m_1/n))$ and $|M_2| = 2n (m/m_1)$. 
    The query algorithm can retrieve any bit of $x$ by making one quantum probe into $M_1$ followed by a classical probe into $M_2$.
\end{lemma}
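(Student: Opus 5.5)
We prove \cref{lm:adaptive} by making the construction sketched just before it precise and checking three things: the storage map is well defined, the space bounds hold, and the query is exact. Write $w=m/m_1$ for the block length (assume $m_1$ divides $m$; otherwise pad $x$ with zeros, which changes $m$ by at most a factor of $2$). Let $A=\{i_1,\dots,i_r\}\subseteq[m_1]$ be the set of indices of non-zero blocks. Since $\wt_H(x)\le n$, we have $r\le n$.

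\textbf{Storage.} Fix the graph $G:[m_1]\times[d]\to[2n]$ of \cref{lm:expander}, with $d=O(\max\{\log(m_1/n),1\})$, hardwired into the query circuit. By Hall's theorem (the remark after \cref{lm:expander}) there are labels $\ell_u\in[d]$ such that the vertices $G(i_u,\ell_u)$ are distinct. Let $\ell'=\lceil\log_2(d+1)\rceil$, and encode each label as an $\ell'$-bit string. Set $M_1[i_u]=h(\ell_u)$ and $M_1[v]=h(0)$ for $v\notin A$, where $h:\{0,1\}^{\ell'}\to\{0,1\}^{2^{\ell'}}$ is the Hadamard encoding $h(z)_a=a\cdot z \bmod 2$. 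Then $|M_1|=m_1 2^{\ell'}\le 2m_1(d+1)=O(m_1\log(m_1/n))$; when $m_1\le n$ this is read as $O(m_1)$, consistent with the $\max$. View $M_2$ as $2n$ cells of $w$ bits each. Set $M_2[G(i_u,\ell_u)]=b_{i_u}$ and fill the remaining cells with zeros. Distinctness of the matched vertices makes this well defined, and $|M_2|=2n(m/m_1)$.

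\textbf{Retrieval.} Given $i$, compute classically the block index $j=\lceil i/w\rceil$ and the offset $p=i-(j-1)w$. For the first probe, apply the Bernstein--Vazirani algorithm to the segment $M_1[j]$ of $y=M_1\circ M_2$:
\begin{itemize}
\item prepare the uniform superposition over the $2^{\ell'}$ addresses of that segment;
\item query $O_y$ once;
\item apply $H^{\otimes \ell'}$ and measure.
\end{itemize}
Because $O_y$ acts as $\ket{a}\mapsto(-1)^{a\cdot z}\ket{a}$ on these addresses with $z$ the stored label, the outcome is exactly $z$ with probability $1$. If $z=0$, output $0$. Otherwise, query classically the bit at offset $p$ of $M_2[G(j,z)]$ and output it.

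\textbf{Correctness.} If $j\notin A$, then $M_1[j]=h(0)$ and the block $b_j$ is zero, so the output $0$ equals $x_i$. If $j=i_u$, the first probe returns $\ell_u\neq 0$ (labels lie in $[d]$, so they are nonzero). The second probe then reads cell $G(i_u,\ell_u)$, which holds $b_{i_u}$ exactly; no other block was routed there because the matching is injective. Hence the output is $x_i$ with certainty.

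The only delicate points are bookkeeping. First, the zero label must be reserved as the ``empty block'' marker, which is why labels are drawn from $[d]$ and $h(0)$ is used for empty blocks. Second, Bernstein--Vazirani must be exact rather than bounded-error; the phase-oracle form $O_y$ gives exactly this. The main substantive ingredient is the existence of the saturating matching, which is supplied by \cref{lm:expander}.
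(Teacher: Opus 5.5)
Your proposal is correct and follows essentially the same construction as the paper: blocks of length $m/m_1$, a saturating matching in the graph of \cref{lm:expander} into $2n$ cells of $M_2$, Hadamard-encoded labels (with $h(0)$ marking empty blocks) read exactly by a single Bernstein--Vazirani probe into $M_1$, and then one classical probe into the matched cell of $M_2$. Your extra bookkeeping ($2^{\ell'}\le 2(d+1)$ per entry of $M_1$, and reserving label $0$) is sound; the only cosmetic caveat is that padding gives $|M_2|=2n\lceil m/m_1\rceil$ rather than exactly $2n(m/m_1)$, a point the paper sidesteps by tacitly assuming $m_1$ divides $m$.
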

Note that by setting $m_1 = \sqrt{mn/\log (m/n)}$, 
we can conclude from this that $s_Q(m,n,t=2) = O(\sqrt{mn\log (m/n)})$; but for the factor $\sqrt{\log(m/n)}$, this bound is asymptotically optimal. 

\subsection{The non-adaptive two-probe scheme} 

The non-adaptive scheme for $t=2$ bears some similarity to the adaptive schemes discussed above (see~\cref{fig:non-adaptive}). As before, we partition $x$ into $m_1$ substrings $b_1, b_2, \ldots, b_{m_1}$ of $m/{m_1}$ bits each.
The string $M_1$ now has $m_1$ bits; we set $M_1[u]=1$ if $b_{u}$ is non-zero and $0$ otherwise, so one classical probe into $M_1$ reveals if $b_i$ is all zeroes. This time, $M_2$ will be set up for an application of the Deutsch-Jozsa algorithm. The contents of $M_2$ will be based on a bipartite graph $G: [m] \times [d] \to [R]$, with the following property:
\begin{quote}
 For every set $A \subseteq [m]$ with $|A| \leq (m/m_1) n$ elements, for every $a \in A$, we have
 \[ \left|\Gamma_G(a) \setminus \bigcup_{a' \in A\setminus \{a\}} \Gamma_G(a')\right| \geq d/2. \]
\end{quote}
We show below, using~\cref{lm:expander2} how such a bipartite graph with appropriate parameters can be constructed; the bipartite graph will be fixed depending on $m$ and $n$, but independent of the string $x$ that needs to be compressed. For now, let us assume that such bipartite graph can be found. Then, $M_2$ will be a string of length $R$, whose bits are determined as follows.
Let $A$ be the set of indices $a \in [m]$ such that $x[a]$ is part of one of the non-zero strings $b_{i_u}$. Then, $|A|\leq (m/m_1)r \leq (m/m_1) n$. By the above property of the bipartite graph, we may assign to each $a \in A$ a set $B_a$ of neighbors such that $(B_a: a \in A)$ is a sequence of pairwise disjoint sets, each with $d/2$ elements. For all $a \in A$ and $j \in B_a$, we set $M_2[j]=x_a$; the remaining bits of $M_2$ are set to $0$. To determine $x_i$ from $M_1 \circ M_2$, the query algorithm non-adaptively makes one probe each to $M_1$ and $M_2$.
\begin{description}
    \item[Probe into $M_1$:] The algorithm determines (classically) the substring $b_j$ to which $x_i$ belongs; then it reads $e = M_1[j]$. Recall that $e$ indicates whether or not the substring $x_i$ to with $x_i$ belongs is all zeroes.
    \item[Probe into $M_2$:] We regard $M_2$ as a $0$-$1$ labelling of the vertices of $R$. In the bipartite graph $G$, each vertex $i \in [m]$ has a set $\Gamma_G(i) \subseteq R$ of $d$ neighbors. The query algorithm uses the Deutsch-Jozsa algorithm on the labeling $M_2$ restricted to $\Gamma_i$ to obtain a bit $f$. Note that  $M_2$ is designed so that for $i \in A$, all its neighbors are labelled $0$ if $x_i=0$, and  exactly half of them are labelled $0$ if $x_i=1$. 
\end{description}
Using one probe each into the two strings, the query algorithm computes $e$ and $f$ as described above; it then returns $e \wedge f$ as its output.
If $i \in A$, we have $f=x_i$ and $e=1$, and the query algorithm returns the correct answer. For $i \not\in A$, then by definition $x_i=0$; since $e=0$ in this case, the query algorithm's answer is correct. 

\noindent\emph{The expander graph:} Our graph will be of the form $G: [m] \times [d] \rightarrow [R]$. We will view $[m]$ as $[m_1] \times [m/m_1]$ and 
$R$ as $[R'] \times [m/m_1]$; so, the left vertices of the $G$ are of the form 
$(i,j) \in [m_1] \times [m/m_1]$ and the right vertices are of the form $(r,w) \in [R'] \times [m/m_1]$. Let $H: [m_1] \times [d] \rightarrow [10nd]$ be the graph promised by~\cref{lm:expander2}. Then, for $(i,j) \in [m_1] \times [m/m_1]$ and $u \in [d]$, let $G((i,j), u) = (H(i,u),j)$. With the identification of 
$[m]$ with $[m_1] \times [m/m_1]$ and and $[R]$ with $[10nd] \times [m/m_1]$, we obtain a bipartite graph $G: [m] \times [d] \rightarrow [10 n d (m/m_1)]$ with $d = O(n \log (m_1/n))$ that has the required properties. We may view this construction as follows. We write the numbers in $[m]$ in a $m_1 \times (m/m_1)$ array in the natural order, increasing from left to right and top to bottom. We place a copy of $H$ between the indices in each column and a fresh set of $10nd$ vertices; to obtain the required graph $G$, we take the union of all these graphs.  See \cref{fig:non-adaptive}.
\begin{figure}[t]
\centering
\begin{subfigure}[b]{\PanelAWidth}
\centering
\resizebox{\linewidth}{!}{%
\begin{tikzpicture}[
  >=Stealth,
  every node/.style={font=\small},
  cell/.style={draw, thick, minimum height=0.9cm, inner sep=2pt},
  rowcell/.style={draw, thick, minimum width=2.2cm, minimum height=0.55cm, inner sep=2pt, anchor=north},
  pink/.style={fill=red!12},
  grn/.style={fill=green!25},
  wht/.style={fill=white},
  dot/.style={circle, fill=green!55!black, inner sep=1.6pt},
  bin/.style={draw, thick, minimum width=0.55cm, minimum height=3.4cm},
]
 
  \def\w{1.0}
  \def\wdots{2.4}
  \pgfmathsetmacro{\ea}{0}
  \pgfmathsetmacro{\eb}{\ea+\w}
  \pgfmathsetmacro{\ec}{\eb+\w}
  \pgfmathsetmacro{\ed}{\ec+\w}
  \pgfmathsetmacro{\ee}{\ed+\w}
  \pgfmathsetmacro{\ef}{\ee+\w}
  \pgfmathsetmacro{\eg}{\ef+\w}
  \pgfmathsetmacro{\eh}{\eg+\wdots}
  \pgfmathsetmacro{\ca}{\ea+\w/2}
  \pgfmathsetmacro{\cb}{\eb+\w/2}
  \pgfmathsetmacro{\cc}{\ec+\w/2}
  \pgfmathsetmacro{\cd}{\ed+\w/2}
  \pgfmathsetmacro{\ceE}{\ee+\w/2}
  \pgfmathsetmacro{\cf}{\ef+\w/2}
  \pgfmathsetmacro{\cg}{\eg+\wdots/2}
  \pgfmathsetmacro{\ch}{\eh+\w/2}
  \def\yx{11.3}
 
  \node[left] at (\ea-0.35,\yx) {$x=$};
  \node[cell,pink,minimum width=\w cm] at (\ca,\yx) {$b_1$};
  \node[cell,grn, minimum width=\w cm] at (\cb,\yx) {$b_2$};
  \node[cell,pink,minimum width=\w cm] at (\cc,\yx) {$b_3$};
  \node[cell,grn, minimum width=\w cm] at (\cd,\yx) {$b_4$};
  \node[cell,grn, minimum width=\w cm] at (\ceE,\yx) {$b_5$};
  \node[cell,pink,minimum width=\w cm] at (\cf,\yx) {$b_6$};
  \node[cell,wht, minimum width=\wdots cm] at (\cg,\yx) {$\cdots$};
  \node[cell,grn, minimum width=\w cm] at (\ch,\yx) {$b_{m_1}$};
 
  \def\yc{9.4}
  \node[rowcell,pink] (r1) at (0,\yc)        {$b_1$};
  \node[rowcell,grn]  (r2) at (0,\yc-0.55)   {$b_2$};
  \node[rowcell,pink] (r3) at (0,\yc-1.10)   {$b_3$};
  \node[rowcell,grn]  (r4) at (0,\yc-1.65)   {$b_4$};
  \node[rowcell,grn]  (r5) at (0,\yc-2.20)   {$b_5$};
  \node[rowcell,pink] (r6) at (0,\yc-2.75)   {$b_6$};
  \node[rowcell,pink] (r7) at (0,\yc-3.30)   {$b_7$};
  \node[rowcell,wht]  (r8) at (0,\yc-3.85)   {$\vdots$};
  \node[rowcell,grn]  (r9) at (0,\yc-4.40)   {$b_{m_1}$};
 
  \pgfmathsetmacro{\ymid}{\yc-2.2}
  \draw[->, ultra thick, blue!70!black] (3.0,\ymid) -- (4.6,\ymid);
 
  \def\dtop{1.29}
  \def\dmid{0.20}
  \def\dbot{-1.60}
 
  \pgfmathsetmacro{\bx}{5.2}
  \node[bin] (bin0) at (\bx+0.0,\ymid) {}; \node[below=2pt of bin0] {$c_1$};
    \node[dot] at (\bx+0.0,{\ymid+\dtop}) {};
  \node[bin] (bin1) at (\bx+1.0,\ymid) {}; \node[below=2pt of bin1] {$c_2$};
    \node[dot] at (\bx+1.0,{\ymid+0.46}) {};
    \node[dot] at (\bx+1.0,{\ymid+\dmid}) {};
  \node[bin] (bin2) at (\bx+2.0,\ymid) {}; \node[below=2pt of bin2] {$c_3$};
    \node[dot] at (\bx+2.0,{\ymid+\dbot}) {};
  \node[bin] (bin3) at (\bx+3.0,\ymid) {}; \node[below=2pt of bin3] {$c_4$};
    \node[dot] at (\bx+3.0,{\ymid+\dmid}) {};
    \node[dot] at (\bx+3.0,{\ymid+\dbot}) {};
  \node[bin] (bin4) at (\bx+4.0,\ymid) {}; \node[below=2pt of bin4] {$c_5$};
    \node[dot] at (\bx+4.0,{\ymid+0.46}) {};
 
  \node at (\bx+5.0,\ymid) {$\cdots$};
 
  \node[bin] (binlast) at (\bx+6.0,\ymid) {}; \node[below=2pt of binlast] {$c_{m_2}$};
    \node[dot] at (\bx+6.0,{\ymid+\dtop}) {};
    \node[dot] at (\bx+6.0,{\ymid+\dbot}) {};
 
\end{tikzpicture}%
}
\caption{The input string split into columns}
\label{fig:hash-buckets}
\end{subfigure}
\hfill
\begin{subfigure}[b]{\PanelBWidth}
\centering
\resizebox{\linewidth}{!}{%
\begin{tikzpicture}[
  >=Stealth,
  every node/.style={font=\small},
  dot/.style={circle, fill=green!55!black, inner sep=1.6pt},
  gadget/.style={draw, dashed, thick, rounded corners, inner sep=8pt},
]
 
  \def\ygA{3.0}
  \def\ygB{0.7}
  \def\ygC{-1.9}
  \def\gtop{0.24}
  \def\gmid{0.06}
  \def\gbot{-0.40}
 
  \newcommand{\gadgetbox}[3]{%
    \node[gadget, minimum width=4.6cm, minimum height=1.6cm] (#3) at (#1,#2) {};
    \node[draw, thick, minimum width=0.4cm, minimum height=1.1cm] (#3box) at ($(#3.center)+(-0.15,0)$) {};
    \node[left=6pt of #3box] {};
    \node[right=4pt of #3box] {$H$};
    \node[draw, thick, ellipse, minimum width=0.55cm, minimum height=0.8cm, right=25pt of #3box] {};
  }
 
  \gadgetbox{0.2}{\ygA}{g1}
  \node[anchor=east] at ($(g1.center)+(-1.0,0)$) {$c_1$};
  \node[dot] at ($(g1box.center)+(0,\gtop)$) {};
 
  \gadgetbox{5.4}{\ygA}{g2}
  \node[anchor=east] at ($(g2.center)+(-1.0,0)$) {$c_2$};
  \node[dot] at ($(g2box.center)+(0,\gtop)$) {};
  \node[dot] at ($(g2box.center)+(0,\gmid)$) {};
 
  \gadgetbox{10.6}{\ygA}{g3}
  \node[anchor=east] at ($(g3.center)+(-1.0,0)$) {$c_3$};
  \node[dot] at ($(g3box.center)+(0,\gbot)$) {};
 
  \gadgetbox{0.2}{\ygB}{g4}
  \node[anchor=east] at ($(g4.center)+(-1.0,0)$) {$c_4$};
  \node[dot] at ($(g4box.center)+(0,\gtop)$) {};
  \node[dot] at ($(g4box.center)+(0,\gbot)$) {};
 
  \gadgetbox{5.4}{\ygB}{g5}
  \node[anchor=east] at ($(g5.center)+(-1.0,0)$) {$c_5$};
  \node[dot] at ($(g5box.center)+(0,\gtop)$) {};
 
  \gadgetbox{10.6}{\ygB}{g6}
  \node[anchor=east] at ($(g6.center)+(-1.0,0)$) {$c_6$};
 
  \node at (5.4,{(\ygB+\ygC)/2}) {$\vdots$};
 
  \gadgetbox{0.2}{\ygC}{g7}
  \node[anchor=east] at ($(g7.center)+(-1.0,0)$) {$c_{m_2-2}$};
 
  \gadgetbox{5.4}{\ygC}{g8}
  \node[anchor=east] at ($(g8.center)+(-1.0,0)$) {$c_{m_2-1}$};
 
  \gadgetbox{10.6}{\ygC}{g9}
  \node[anchor=east] at ($(g9.center)+(-1.0,0)$) {$c_{m_2}$};
  \node[dot] at ($(g9box.center)+(0,\gtop)$) {};
  \node[dot] at ($(g9box.center)+(0,\gbot)$) {};
 
\end{tikzpicture}%
}
\caption{One copy of $H$ for each column}
\label{fig:non-adaptive-expander}
\end{subfigure}
\caption{The template for the non-adaptive scheme}
\label{fig:non-adaptive}
\end{figure}
We summarize the above observations.
\begin{lemma} \label{lm:non-adaptive}
    Fix $m_1\leq m/2$. There is a valid non-adaptive two-probe compression scheme that compresses a string $x \in \{0,1\}^m$ into a string of the form $M_1 \circ M_2$, where $|M_1|= m_1$  and $|M_2| = O(n^2 (m/m_1) \log (m_1/n))$. 
    The query algorithm can retrieve any bit of $x$ by making one classical probe into $M_1$ followed by a quantum probe into $M_2$.
\end{lemma}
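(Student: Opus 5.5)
The plan is to assemble the scheme described just before the statement and check three things: the sizes, the existence of the graph $G$ with the private-neighbourhood property, and exactness of the query algorithm.

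\textbf{Storage and the graph.} $M_1$ is the $m_1$-bit indicator of the non-zero blocks $b_u$, so $|M_1|=m_1$. For $M_2$, take $H:[m_1]\times[d]\to[10nd]$ from \cref{lm:expander2} with the parameter $n$ (at most $n$ blocks are non-zero). This gives $d=O(n\log(m_1/n))$, with $d$ a power of $2$. Define $G((i,j),u)=(H(i,u),j)$ on $[m_1]\times[m/m_1]$, so $R=10nd(m/m_1)=O(n^2(m/m_1)\log(m_1/n))$.

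Next I verify the private-neighbourhood property for the set $A$ that actually arises. Here $A$ is the set of positions lying in non-zero blocks, so $A=A_1\times[m/m_1]$ with $|A_1|\le n$. Fix $a=(i,j)\in A$. A neighbour $(H(i,u),j)$ of $a$ can be shared only with some $a'=(i',j)$ in the same column, with $i'\in A_1\setminus\{i\}$, since $G$ preserves the column coordinate $j$. Hence the private neighbours of $a$ in $G$ correspond exactly to the private neighbours of $i$ in $H$ relative to $A_1$, and there are at least $d/2$ of them by \cref{lm:expander2}(iii). The statement for arbitrary $A$ of size $(m/m_1)n$ in the text is stronger than needed; only this structured $A$ matters.

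\textbf{Defining $M_2$.} For each $a\in A$, choose $B_a$ to be exactly $d/2$ of its private neighbours. Set $M_2[w]=x_a$ for $w\in B_a$, and $M_2[w]=0$ everywhere else.

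\textbf{Correctness of the query.} Fix $i\in A$ and restrict $M_2$ to $\Gamma_G(i)$, which has $d$ elements since $H$ gives distinct neighbours by (ii). Any $w\in\Gamma_G(i)\setminus B_i$ is either a neighbour shared with some $a'\in A$, which lies in no $B_{a'}$ because the $B$'s are private, or it lies in no $B$ at all. In both cases $M_2[w]=0$. So the restriction is all-zero if $x_i=0$ and exactly balanced if $x_i=1$.

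Deutsch--Jozsa then needs the restriction indexed by $\{0,1\}^{\log d}$, which is why $d$ is a power of $2$. Identify $[d]$ with $\{0,1\}^{\log d}$ via $u\mapsto G(i,u)$, which is injective by (ii). Prepare the uniform superposition over $u$, map $\ket{u}\mapsto\ket{G(i,u)}$ with an index-controlled unitary, query $O_{M_2}$, uncompute, and apply Hadamards. This returns $f=x_i$ with certainty.

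The probe to $M_1$ is classical and independent of the quantum probe, so the two are non-adaptive. The output $e\wedge f$ is correct in both cases: for $i\in A$ we have $e=1$ and $f=x_i$, and for $i\notin A$ we have $e=0=x_i$. The only delicate step is the column-preserving argument that reduces private neighbourhoods in $G$ to those in $H$. The rest is bookkeeping, including absorbing the case $m_1\le 10n$ into the trivial construction of \cref{lm:expander2}.
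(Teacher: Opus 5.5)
Your proposal is correct and follows the paper's construction: the indicator string $M_1$, column-wise copies of the graph $H$ from \cref{lm:expander2}, Deutsch--Jozsa on $\Gamma_G(i)$, and output $e\wedge f$. You make two refinements: you check private neighbourhoods only for sets of the form $A_1\times[m/m_1]$ with $|A_1|\le n$, and you choose each $B_a$ inside the private neighbourhood of $a$ rather than merely requiring the $B_a$ to be pairwise disjoint. These are exactly what the argument needs, and they are more accurate than the paper's text, whose stated property for arbitrary $A$ of size $(m/m_1)n$ does not hold for the column construction.
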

We may set $m_1=n\sqrt{m \log m}$, and conclude that $s_{QN}(m,n,t=2) = O(n \sqrt{m \log (m_1/n)})$.

\section{The general schemes via recursion}

We observe that a $(m,n,t=2,s)$-compression scheme whose query algorithm makes at least one classical probe into a memory that is always assigned a string of Hamming weight at most $n$, can be naturally lifted, via a recursive construction, to a $t$-probe scheme for every $t\geq 2$, while retaining the adaptive or non-adaptive nature of the original scheme. For $t>2$, the scheme proceeds exactly as the two-probe scheme, except instead of making the classical probe to read the bit, it retrieves the bit using using a $(t-1)$-probe scheme recursively.

Let illustrate this principle with the two-probe schemes presented in the previous section. 
\begin{proof}[Proof of \cref{thm:tProbeAdaptive}]
Consider the adaptive scheme presented in~\cref{lm:adaptive}. 
The scheme uses $M_1$ and $M_2$, where the classical probe is made into $M_2$. 
Furthermore, $|M_1|= m_1 \log (m_1/n)$ and $|M_2|= 2n (m/m_1)$. We may compress $M_2$ recursively, if we are allowed more probes. In particular, this yields the bound:
\[ s_Q(m,n,t) \leq m_1\log(m_1/n)+s_Q(2nm_2,n,t-1).\]
When solved together with the obvious upper bound $s_Q(m,n,t) \leq m$ (in particular, for the base case $t=1$), we obtain
the bound stated in~\cref{thm:tProbeAdaptive}.

A convenient induction hypothesis is
\[
s_Q(m,n,t)\leq
C\,m^{1/t}n^{1-1/t}\log^{1-1/t}(m/n)
\]
for an absolute constant $C$. Let
\[
r=m/n,\qquad L=\log r.
\]
For $t>1$, choose
\[
m_1=A n\left(\frac{r}{L}\right)^{1/t},
\qquad m_2=m/m_1,
\]
where $A$ is a sufficiently large absolute constant. Then
\[
m_1\log(m_1/n)
=O\!\left(
\frac{1}{t}\,
n r^{1/t}L^{1-1/t}
\right),
\]
assuming $t=O(L)$. Moreover, writing
\[
r'=\frac{2nm_2}{n}
=\frac{2}{A}r^{1-1/t}L^{1/t},
\]
we have
\[
\log r'
=\frac{t-1}{t}L+\frac{1}{t}\log L+O(1).
\]
Applying the induction hypothesis to the recursive term gives
\begin{align*}
s_Q(2nm_2,n,t-1)
&\leq
C n(r')^{1/(t-1)}
       \log^{1-1/(t-1)}r'\\
&\leq
C\left(1-\frac{c}{t}\right)
n r^{1/t}L^{1-1/t}
\end{align*}
for some absolute constant $c>0$, by choosing $A$ sufficiently large.
Thus the first term is absorbed by the slack in the recursive term,
for a sufficiently large absolute constant $C$, and hence
\[
s_Q(m,n,t)
=O\!\left(
m^{1/t}n^{1-1/t}
\log^{1-1/t}(m/n)
\right).
\]
\end{proof}

\begin{proof}[Proof of \cref{thm:tProbeNonAdaptive}]
For the non-adaptive scheme, we use \cref{lm:non-adaptive} and obtain
\[ s_{QN}(m,n,t)\leq s_{QN}(m_1,n,t-1) + O(n^2 m_2\log (m_1/n)),\] 
When solved together with the obvious upper bound $s_{QN}(m,n,t) \leq m$ (in particular, for the base case $t=1$), we obtain
the bound stated in~\cref{thm:tProbeNonAdaptive}. A convenient induction hypothesis is
\[
s_{QN}(m,n,t)\leq
C t\,m^{1/t}n^{2-2/t}\log^{1-1/t}(m/n)
\]
for an absolute constant $C$. For $t>1$, choose
\[
m_1=m^{(t-1)/t}\bigl(n^2\log(m/n)\bigr)^{1/t},
\qquad m_2=m/m_1.
\]
Applying the induction hypothesis to the recursive term, and using
$\log(m_1/n)\leq\log(m/n)$, gives
\begin{align*}
s_{QN}(m,n,t)
&\leq C(t-1)m^{1/t}n^{2-2/t}
       \log^{1-1/t}(m/n)+O\!\left(m^{1/t}n^{2-2/t}
       \log^{1-1/t}(m/n)\right)\\
&\leq Ct\,m^{1/t}n^{2-2/t}
       \log^{1-1/t}(m/n),
\end{align*}
for a sufficiently large absolute constant $C$, which completes the proof of \cref{thm:tProbeNonAdaptive}. 
\end{proof}

\bibliographystyle{plain}   
\bibliography{reference} 
\end{document}